\theoremstyle{plain}
  \newtheorem{theorem}{Theorem}[section]
  \newtheorem{proposition}[theorem]{Proposition}
  \newtheorem{lemma}[theorem]{Lemma}
\theoremstyle{definition}
\theoremstyle{remark}
\numberwithin{equation}{section}
\newcommand{\re}{\Re\,}
  \let\de=\delta 
 \let\la=\lambda
\newcommand{\caI}{{\mathcal I}}
\newcommand{\opunit}{\text{1}\kern-0.22em\text{l}}
\DeclareMathAlphabet{\mathpzc}{OT1}{pzc}{m}{it}
\newcommand{\id}{\textrm{d}}
\begin{document}

\title{{\Large{\bf Monotonicity of the dynamical activity}}}

\author{Christian Maes}
\affiliation{Instituut voor Theoretische Fysica, KU Leuven,
Belgium} \email{christian.maes@fys.kuleuven.be}
\author{Karel Neto\v{c}n\'{y}}
\affiliation{Institute of Physics AS CR, Prague, Czech Republic}
\author{Bram Wynants}
\affiliation{Institut de Physique Th\'eorique, CEA-Saclay, France}

\keywords{dynamical large deviations, nonequilibrium relaxation;\\
PACS numbers 02.50.Ga, 05.20.Dd, 05.40.-a, 51.10.+y}

\begin{abstract}
The Donsker-Varadhan rate function for occupation-time fluctuations has been seen numerically to exhibit monotone return to stationary nonequilibrium [Phys. Rev. Lett. 107, 010601 (2011)]. That rate function is related to dynamical activity and, except under detailed balance, it does not derive from the relative entropy for which the monotonicity in time is well understood. We give a rigorous argument that the Donsker-Varadhan function is indeed monotone under the Markov evolution at large enough times with respect to the relaxation time, provided that a ``normal linear-response'' condition is satisfied.
\end{abstract}

\maketitle
\section{Physical motivation and main finding}\label{pmo}
Equilibrium under given thermodynamic conditions is characterized by minimizing the appropriate thermodynamic potential.  For example, for a given temperature and pressure of the environment, an open system finds its equilibrium condition from minimizing the Gibbs free energy.  That is why water has to boil at around 100$^o$ Celsius under atmospheric pressure; that is the temperature above which the gas-phase of water gets a smaller Gibbs free energy.  That minimization is also dynamically realized.  Indeed, in many cases the second law of thermodynamics can be extended to become an H-theorem for the relaxation to equilibrium where the corresponding thermodynamic potential shows monotone decay to its equilibrium value.  Close-to-equilibrium the situation then resembles
the motion of a particle in a parabolic well with statistical forces derived from gradients in the equilibrium thermodynamic potential. That is indeed the framework of linear irreversible thermodynamics and the monotonicity in the approach to equilibrium relates essentially to the stability of the equilibrium phase.

The situation becomes much less clear when going to strong nonequilibrium regimes.  Nature is full of
stable nonequilibria for open systems in contact with different reservoirs, and yet no description in terms of thermodynamic potentials derived from energy and entropy (possibly supplemented with few other macroscopic quantities) has been found.  In fact, oscillations of the macroscopic condition are very well possible and can be effectively described by various models of dynamical systems.  But even restricting ourselves to a stationary macroscopic condition, or to a stationary open system in contact with large reservoirs, no \emph{physically} clear monotonicity has been observed in the approach to nonequilibrium. The numerical results observed in \cite{newp} open a new route that needs
further study.

The present paper addresses that general theoretical question, restricting to the simplest context of
Markov evolutions as described by Master or Fokker-Planck equations.  It is true that there the relative entropy with respect to the stationary distribution is always monotone.  Yet, that Lyapunov function is both well studied and not found directly related to known physical properties such as heat or work to which it relates under detailed balance.  This paper looks elsewhere and we find a natural candidate for monotonicity to be the Donsker-Varadhan functional governing stationary dynamical fluctuations.  It is physically related to the notion of dynamical activity, as will be explained below, a concept that has grown in importance for the elucidation of relaxation behavior in kinematically constrained systems.  In \cite{newp} we found that this functional is monotone in many examples.  The present paper adds a mathematical proof but we need an extra assumption: we show that this Donsker-Varadhan functional is monotone in the approach to stationary nonequilibrium under normal response behavior which is precisely stated as the sufficient condition in our main Theorem \ref{the}. We also add various examples that clarify the nature of that sufficient condition. As a result, the present paper is more technical than \cite{newp} and we now enter into more specific details.\\

Large deviation theory for Markov processes was developed by Donsker and Varadhan in 1975, \cite{DV}.  We recall the main setting.\\
Consider an ergodic Markov jump process $P_\rho$ with stationary probability law $\rho$ and with transition rates $k(x,y)$ over a finite state space $K$.  The empirical fraction of time that the system spends in state $x\in K$ over time-interval $[0,T]$ is
\begin{equation}\label{ot}
p_T(\omega,x) := \frac 1{T}\int_0^T \delta_{x_s,x}\,\id s,\quad x\in K
 \end{equation}
 where $\omega = (x_s, 0\leq s<T)$ is the piecewise constant trajectory.
By the assumed ergodicity, $p_T
\rightarrow \rho$ for $ T \uparrow +\infty$, $P_\rho-$almost surely. For the fluctuations
around that law of large times, there is a principle of large deviations, abbreviated as
\begin{equation}\label{abb}
P_\rho[p_T \simeq \mu] \propto e^{-T {\cal I}(\mu)},\quad T \uparrow +\infty
\end{equation}
for probability distributions $\mu$ on $K$,
in the usual logarithmic and asymptotic sense $T\uparrow +\infty$; see e.g. \cite{DZ,HS}.  
Equivalently, for all continuous functions $f$,
\[
\lim_{T\uparrow +\infty} \frac 1{T} \log
\Bigl\langle e^{\int_0^T f(x_t)\,\id t} \Bigr\rangle_\rho =
\inf_\mu \Bigl( \sum_xf(x)\mu(x) - {\cal I}(\mu) \Bigr)
\]
under the stationary expectation $\langle\cdot\rangle_\rho$.
\begin{equation}\label{dv}
{\cal I}(\mu) = \sup_{g>0}
\Bigl(
-\sum_x \frac{\mu(x)}{g(x)}\sum_y k(x,y)\, [g(y) - g(x)]
\Bigr)
\end{equation}
over positive functions $g$.

Since the functional ${\cal I}$ is strictly convex with unique minimum reached at the stationary distribution it is mathematically
natural to ask whether ${\cal I}(\mu_t)$ is also monotonically decaying to zero under the time evolution given by the Master equation
\[
\frac{\id}{\id t}\,\mu_t(x)  =
\sum_{y \in K} [k(y,x)\,\mu_t(y) - k(x,y)\,\mu_t(x)], \quad \mu_0=\mu
\]
at least when $\mu$ (the initial condition) is sufficiently close to $\rho$.  We found numerically in many cases that the answer is \emph{yes}, where the close-to-stationarity is verified for large times $t$. Then indeed $\mu_t$ gets sufficiently close to the stationary $\rho$.  That was reported in \cite{newp}.  A mathematical proof of monotonicity of the functional $\caI(\mu_t)$ for large times $t$ is lacking and the present paper will still need an additional assumption, called ``normal linear response.''
The normal linear response refers physically to the monotone decay of the linear response function, and mathematically it can be phrased as a sector condition on the backward generator $L$. The latter essentially means that the eigenvalues of $L$ should be contained in a wedge of the complex plane with a sufficiently small angle.  In particular it will be easy to show that the monotonicity holds when the system satisfies the condition of detailed balance, and hence, by a continuity argument, the monotone return to steady nonequilibrium is also valid {\it around} detailed balance. Yet again, as we will see below, the monotonicity often continues to hold even beyond the linear regime around detailed balance.\\

We now turn to the more physical motivation.
Recently, from the point of view of nonequilibrium statistical mechanics, there has been great interest in dynamical fluctuation theory, and in the occupation statistics in particular. The Donsker-Varadhan functional is exactly governing these fluctuations as mentioned above.  In that way our result mirrors the monotone behavior of the relative entropy which is associated to the {\it static} fluctuations of the system, \cite{sta}.  Apart from this more abstract analogy there are specific places where the functional ${\cal I}$ has turned up in nonequilibrium considerations. We know for example that close-to-equilibrium $\cal I$ is proportional to the excess in expected entropy production rate with respect to the stationary entropy production rate, \cite{minep}, which gives a fluctuation-based understanding of the minimum entropy production principle.  Nonperturbatively, the functional ${\cal I}(\mu)$ is an excess in expected dynamical activity (DA) as we now explain.

Usually for jump processes, one calls dynamical activity the quantity defined on path-space that counts the number of jumps or transitions.
Fixing any two distinct states $x, y$ the expected rate of jumping $x\rightarrow y$ when in $x$ is of course $k(x,y)$.  Therefore, under distribution $\mu$, the expected number of jumps per unit time is given by
\[
\xi(\mu) :=
\sum_x \mu(x) \sum_{y} k(x,y) =
\frac 1{2} \sum_{x, y} [\mu(x) k(x,y) + \mu(y) k(y,x)]
\]
which is the symmetric counterpart of the current (formally with $k(x,x) \equiv 0$). Since no confusion arises here, in the sequel we use the term \emph{dynamical activity} for the functional $\xi(\mu)$ evaluating the expected value of the more commonly defined variable dynamical activity on path-space.
The Donsker-Varadhan functional is the difference ${\cal I}(\mu) = \xi(\mu) - \xi_V(\mu)$ where $\xi_V$ is defined like $\xi$ but for modified rates $k_V(x,y) := k(x,y) \exp \{[V(y) - V(x)]/2\}$ with a potential $V$ so that the jump process with rates $k_V(x,y)$ makes $\mu$ stationary.  That will be explained in more detail in the next section,  and will be made most explicit in formula \eqref{das}.  The ${\cal I}(\mu)$ is thus an excess in expected activity between the original dynamics and a modified dynamics for which $\mu$ is made stationary.

On a broader level, the functional ${\cal I}(\mu)$ refers to a combination of properties of a statistical mechanical system that are related to its
reactivity and the ability to escape from its present state.  DA, and more specifically its version as defined on path-space, has been studied in connection with glassy behavior and the glass transitions; kinetically constrained models show a reduced dynamical activity over an extensive number of states which leads to dynamical phase transitions, \cite{vW,soll,chan,bodi}.  Finally, DA has appeared in fluctuation and response theory for steady nonequilibria, \cite{fdr,mprf,sas}.  The point is that as a function on trajectories, the dynamical activity is time-symmetric
and complements time-antisymmetric entropy fluxes whenever beyond the linear regime around equilibrium. That is why it enters the nonequilibrium fluctuation structure as well as provides extra contributions to the fluctuation-response relations.\\

Concerning physical implications of the observed (and here partially proven) time-monotonicity of the Donsker-Varadhan functional, we should again compare  with the situation of relaxation to equilibrium. There the existence of a physically meaningful (``Lyapunov'') functional, which does not increase over time, remarkably restricts the collection of admissible relaxation processes. Here we wish to proceed similarly to the strategy in equilibrium but we argue that out of equilibrium it may be useful to start from the DA as a fundamental quantity instead of from the entropy. The present paper contains no final judgement on this proposal but only a mathematically rigorous argument that this remains a valid possibility. It is proven that there is indeed a general tendency to decrease the excess dynamical activity analogous to the arrow of time associated with the increase of entropy under equilibrium condition. This is also related to the largely open problem of nonequilibrium statistical forces: the monotonicity of the Donsker-Varadhan functional suggests that its gradient with respect to macroscopic parameters can play the role of nonequilibrium statistical forces mimicking Onsager's theory of hydrodynamic entropy production. For some related physical arguments supporting the fundamental role of noise and dynamical activity out of equilibrium see e.g.~\cite{lan}.

The next section specifies the mathematical set-up and the main definitions.
Section \ref{mare} collects the main properties, with our result on monotone behavior. Section \ref{exs} discusses various specific examples far and close-to-equilibrium. 
Proofs are collected in Section \ref{pro} after which a final conclusion follows.

\section{Set-up}
As in the previous section, we consider a Markov jump process on a finite state space $K$ with states $x, y,\ldots$ and transition rates $k(x,y)$.  Probability distributions on $K$ will be denoted by $\rho, \mu, \nu,\ldots$.  The backward generator on functions $f$ is
\[
Lf (x) := \sum_{y \in K} k(x,y)\,[f(y) - f(x)]
\]
and its transpose generates the Master equation
\begin{equation}\label{jmu}
\frac{\id}{\id t}\mu_t(x)  +
\sum_{y \in K} j_{\mu_t}(x,y) = 0,\quad j_\nu(x,y) := k(x,y)\,\nu(x) - k(y,x)\,\nu(y)
\end{equation}
for the evolution on probabilities $\mu_t$ starting from some initial $\mu_0=\mu$ on $K$.  We assume that the Markov process is irreducible with unique stationary probability distribution $\rho$, i.e., $\rho(x) > 0$ solves $\sum_y j_\rho(x,y) = 0$ for all $x\in K$.

We say that the dynamics satisfies detailed balance when there is a function $U$ on $K$ for which
\begin{equation}\label{db}
k_e(x,y) \,e^{-U(x)} = k_e(y,x)\,e^{-U(y)}, \quad \rho_e(x)\propto e^{-U(x)}
\end{equation}
Here and below we decorate the rates and the stationary law in that detailed balance case with the subscript `e'.
Then, the free energy functional
\[
{\cal F}(\mu) := \sum_x \mu(x)\, U(x) + \sum_x \mu(x) \log \mu(x) \geq {\cal F}(\rho_e) = - \log \sum_x \exp\,[-U(x)]
\]
satisfies the monotonicity ${\cal F}(\mu_t)\downarrow {\cal F}(\rho_e)$ as a function of time $t$.
That is just a standard consequence of the general monotonicity of the relative entropy under stochastic transformations.  However, the relation between the Shannon entropy
$-\sum_x \mu(x) \log \mu(x)$ and physical notions as work or heat is mostly lost when far away from detailed balance.  A physically relevant alternative when moving away from detailed balance, is to consider the instantaneous entropy production ${\cal{E}}(\mu)$, which for the given context is
\begin{equation}\label{ep}
\begin{split}
{\cal{E}}(\mu) &:= \sum_{x,y}\mu(x)k(x,y)
\log\frac{\mu(x)k(x,y)}{\mu(y)k(y,x)}
\\
&= \frac{1}{2} \sum_{x,y} j_\mu(x,y) \,A_\mu(x,y),\qquad\qquad
A_\mu(x,y):= \log\frac{\mu(x)k(x,y)}{\mu(y)k(y,x)}
\end{split}
\end{equation}
as the  product of ``fluxes'' $j_\mu(x,y)$ and ``forces'' $A_\mu(x,y)$ when the system's distribution is $\mu$, reminiscent of irreversible thermodynamics --- see e.g. \cite{schnak} for more details.  

We now introduce our main object.
We embed the original dynamics into a larger family of processes with transition rates,
\begin{equation}\label{av}
k_W(x,y) := k(x,y)\,\exp \frac {W(y) - W(x)}{2}
\end{equation}
parameterized by functions $W$ on $K$. These functions $W$ are also called potentials.
Here we consider potentials that are directly connected with a probability distribution. What follows is a standard observation within the theory of large deviations, see e.g. Section 3.1.2 in \cite{DZ}, but for self-consistency
we give a full proof in Section \ref{sub:blowtorch}.

\begin{proposition}\label{bt}
For an arbitrary probability distribution $\mu > 0$ there exists a potential
$V = V_\mu$ on $K$ such that $\mu$ is invariant under the modified dynamics with transition rates $k_V(x,y)$.
The potential $V_\mu$ is unique up to an additive constant when the dynamics is irreducible.
\end{proposition}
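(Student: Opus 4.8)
The plan is to realize $V_\mu$ as the minimizer of a convex functional. For $V\colon K\to\bbR$ put
\[
\xi_V(\mu):=\sum_x\mu(x)\sum_y k_V(x,y)=\sum_{x,y}\mu(x)\,k(x,y)\,e^{[V(y)-V(x)]/2},
\]
the dynamical activity of $\mu$ under the tilted rates \eqref{av}. Differentiating term by term, and using that the exponent $[V(y)-V(x)]/2$ enters the $V(x)$- and $V(y)$-derivatives with opposite sign, one finds
\[
\frac{\partial}{\partial V(z)}\,\xi_V(\mu)=-\frac12\sum_y\bigl[k_V(z,y)\,\mu(z)-k_V(y,z)\,\mu(y)\bigr].
\]
Comparing with \eqref{jmu} written for the rates $k_V$, we see that $V$ is a stationary point of $V\mapsto\xi_V(\mu)$ exactly when $\sum_y[k_V(z,y)\mu(z)-k_V(y,z)\mu(y)]=0$ for every $z$, that is, exactly when $\mu$ is invariant under $k_V$. (As a by-product, since $-\sum_x\tfrac{\mu(x)}{g(x)}(Lg)(x)=\xi(\mu)-\xi_V(\mu)$ for $g=e^{V/2}>0$, the supremum in \eqref{dv} equals $\xi(\mu)-\inf_V\xi_V(\mu)$, which recovers the announced identity $\caI(\mu)=\xi(\mu)-\xi_{V_\mu}(\mu)$.) It therefore suffices to prove that $V\mapsto\xi_V(\mu)$ attains its infimum and that the minimizer is unique up to an additive constant.

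Convexity is clear: $\xi_V(\mu)$ is a sum, with nonnegative weights $\mu(x)k(x,y)$, of exponentials of the linear forms $V\mapsto[V(y)-V(x)]/2$; moreover it is unchanged by adding a constant to $V$, so it is enough to minimize over $H:=\{V:\sum_xV(x)=0\}$. On $H$ the functional is coercive, and this is where I would use irreducibility. Writing $m:=\min_xV(x)$ and $M:=\max_xV(x)$, the constraint $\sum_xV(x)=0$ forces $M-m\ge\|V\|_\infty$; irreducibility provides a directed path in the jump graph $\{x\to y:k(x,y)>0\}$ from a minimizer of $V$ to a maximizer of $V$, and since this path has at most $|K|-1$ edges while the increments of $V$ along it telescope to $M-m$, some edge $(x,y)$ on it satisfies $k(x,y)>0$ and $V(y)-V(x)\ge(M-m)/(|K|-1)$. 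As $\mu(x)>0$, this gives $\xi_V(\mu)\ge\mu(x)\,k(x,y)\,e^{[V(y)-V(x)]/2}\to\infty$ when $\|V\|\to\infty$ in $H$. A continuous coercive convex function attains its minimum on $H$, and any minimizer is a stationary point of $V\mapsto\xi_V(\mu)$, hence makes $\mu$ invariant under the associated $k_V$; since $k_V$ has the same jump graph as $k$, it is again irreducible, so $\mu$ is in fact its unique stationary law.

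For uniqueness I would show that $V\mapsto\xi_V(\mu)$ is \emph{strictly} convex on $H$. For a direction $h\in H$ with $h\neq0$, $h$ is non-constant, so by irreducibility there is an edge $x\to y$ with $k(x,y)>0$ and $h(y)\neq h(x)$; then $t\mapsto\mu(x)k(x,y)e^{[V(y)-V(x)]/2}$ evaluated along $V+th$ is strictly convex, while every other term stays convex, so $t\mapsto\xi_{V+th}(\mu)$ is strictly convex. Hence the minimizer on $H$ is unique, which is precisely the uniqueness of $V_\mu$ up to an additive constant. The only genuinely non-routine point is the coercivity estimate, where irreducibility is used in an essential way; the derivative computation and the remaining convexity arguments are elementary.
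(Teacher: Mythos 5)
Your proposal is correct and follows essentially the same route as the paper: both realize $V_\mu$ as the minimizer of the convex functional $V\mapsto\sum_{x,y}\mu(x)k_V(x,y)$, with irreducibility giving coercivity (existence) and strict convexity (uniqueness up to a constant), and with stationarity of the functional equivalent to invariance of $\mu$. The only differences are cosmetic --- you normalize by $\sum_x V(x)=0$ and extract a directed path from argmin to argmax, while the paper fixes $V(x_0)=0$ and uses a covering cycle --- and your implicit use of the positive lower bound $\min\{\mu(x)k(x,y):k(x,y)>0\}$ over the finitely many edges is exactly the paper's constant $\de$.
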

In other words, for arbitrary $\mu >0$ we can always find a function $V$ so that
\begin{equation}\label{mv}
\sum_{y \in K}\big[ k_V(x,y)\,\mu(x) - k_V(y,x)\,\mu(y)\big] = 0, \quad x\in K
\end{equation}

We can compare this with \eqref{dv}.  Indeed, the Donsker--Varadhan large deviation functional can be written in terms of a potential $W$: taking $g=e^{W/2}$ in \eqref{dv},
\begin{equation}\label{das1}
  {\cal I}(\mu) = \sup_{W} \sum_{x,y \in K} \mu(x)\,[k(x,y) - k_W(x,y)]
\end{equation}
When the process is irreducible, cf. Proposition \ref{bt} and its proof in Section \ref{sub:blowtorch}, we then have
\begin{equation}\label{das}
{\cal I}(\mu) = \sum_{x,y \in K} \mu(x)\,[k(x,y) - k_V(x,y)]\quad \mbox{ with } \; V=V_\mu
\end{equation}
For physical motivation and as was mentioned already in Section \ref{pmo}, it is worth noting that ${\cal I}(\mu)$ is an excess or difference between the expected escape rates
$\sum_x\mu(x) \sum_y k(x,y)$ and $\sum_x\mu(x) \sum_y k_V(x,y)$.  Such an expected escape rate estimates the dynamical activity, i.e., the number of transitions per unit time in the process. We refer to the physics literature for further discussion, \cite{fdr,bodi,vW,soll,chan,sas}.

\section{Main result}\label{mare}

For simplicity in the sequel we always assume the irreducibility of continuous time Markov processes with finite state space.  Our main finding is that ${\cal I}(\mu_t)$ is monotone under the evolution~\eqref{jmu} when close enough to stationarity, i.e., for large enough times $t$ compared to the relaxation time, at least under some further and physically interpretable condition.\\

Define the real--space scalar product
$(f,g) := \sum_x f(x) g(x)\,\rho(x)$ so that $(f,Lg) = (L^*f,g)$.
Here, we have introduced the generator $L^*$ of the time-reversed process,
\[
L^*f(x) := \sum_{y \in K} \frac{\rho(y)\,k(y,x)}{\rho(x)}\,[f(y) - f(x)]
\]
We write $L_s$ for the symmetric part of the generator: $L_s := \frac 1{2}(L + L^*)$, and
\[
|||f||| := \max_{x,y}|f(x)-f(y)|
\]
for the variation of a function $f$ on $K$.  Now comes the main result of the paper.
\begin{theorem}\label{the}
Suppose that there is a constant $c>0$, so that $(L_sf,Lf) \geq c\,|||f|||^2$ for all functions $f$ on $K$.  Then, there is a time $t_o>0$ so that for all initial probability distributions $\mu$
on $K$,
\[
\frac{\id}{\id t}\, {\cal I}(\mu_t) \leq 0 \;\;\;\mbox{ for all times } t\geq t_o
\]
\end{theorem}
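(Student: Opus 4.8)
The plan is to linearise around the stationary law and to use that $\mu_t\to\rho$ exponentially fast. Write $\mu_t=\rho\,(1+h_t)$ with $\sum_x\rho(x)h_t(x)=0$; using the stationarity of $\rho$, the Master equation~\eqref{jmu} becomes $\frac{\id}{\id t}h_t=L^*h_t$, so $|||h_t|||\le C\,e^{-\la t}$ with $\la>0$ the spectral gap, uniformly in the initial $\mu$. (Here and below $|||\cdot|||$ is used as a norm on the hyperplane $\caH:=\{f:\sum_x\rho(x)f(x)=0\}$, on which it is nondegenerate since it vanishes only on constants.) It therefore suffices to prove $\frac{\id}{\id t}\caI(\mu_t)\le 0$ whenever $|||h_t|||$ is small enough. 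The strategy is: (i) Taylor-expand $\caI$ about $\rho$ as a quadratic form $Q$ plus a cubic remainder $R$; (ii) compute $\frac{\id}{\id t}Q(h_t)$ and show that the hypothesis is precisely what makes it negative, of order $|||h_t|||^2$; (iii) check that $\frac{\id}{\id t}R(h_t)=O(|||h_t|||^3)$, hence negligible for large $t$.

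\emph{Step (i).} By Proposition~\ref{bt} and the implicit-function argument behind it, applied to~\eqref{mv} with the gauge $\sum_x\rho(x)V_\mu(x)=0$, the map $\mu\mapsto V_\mu$ is real-analytic near $\rho$ (which is positive by irreducibility) and $V_\rho\equiv 0$; via~\eqref{das} this makes $\caI$ real-analytic near $\rho$, and since $\caI\ge 0=\caI(\rho)$ both the value and the gradient of $\caI$ vanish at $\rho$. Writing $\caI(\rho(1+h))=Q(h)+R(h)$ with $Q$ the quadratic part, one reads $Q$ off from~\eqref{das1} by expanding $k_W(x,y)=k(x,y)\,e^{(W(y)-W(x))/2}$ to second order: the part linear in $W$ gives $-\tfrac12(h,LW)=-\tfrac12(L^*h,W)$ and the part quadratic in $W$ gives $\tfrac14(W,L_sW)$, so maximising over $W$ (the maximiser at this order solves $L_sW=L^*h$, i.e.\ it is the first-order blowtorch potential) yields
\[
Q(h)=-\tfrac14\,(L^*h,\,L_s^{-1}L^*h)\ \ge\ 0 ,
\]
where $L_s^{-1}$ is the inverse of $L_s$ on $\caH$ (well defined by irreducibility).

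\emph{Step (ii).} Put $w_t:=L_s^{-1}L^*h_t$, so $L_sw_t=L^*h_t$. Using $\frac{\id}{\id t}h_t=L^*h_t$, the adjoint identity $(f,Lg)=(L^*f,g)$, and the $(\cdot,\cdot)$-selfadjointness of $L_s$, a short computation gives
\[
\frac{\id}{\id t}\,Q(h_t)=-\tfrac12\,(L_sw_t,\,Lw_t) .
\]
The hypothesis $(L_sf,Lf)\ge c\,|||f|||^2$, applied with $f=w_t$, gives $\frac{\id}{\id t}Q(h_t)\le-\tfrac{c}{2}|||w_t|||^2\le 0$. Since $h\mapsto w=L_s^{-1}L^*h$ is a linear automorphism of $\caH$ (both $L^*$ and $L_s$ are invertible on $\caH$) and $|||\cdot|||$ is a norm on $\caH$, one has $|||w_t|||\ge\ka\,|||h_t|||$ for some $\ka>0$, hence $\frac{\id}{\id t}Q(h_t)\le-c'\,|||h_t|||^2$ with $c'=c\ka^2/2>0$.

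\emph{Step (iii).} As $R=\caI-Q$ is analytic with vanishing value, gradient and Hessian at $\rho$, one has $|R(h)|\le C|||h|||^3$ and $|\nabla R(h)|\le C|||h|||^2$ near $0$, so $|\frac{\id}{\id t}R(h_t)|=|\nabla R(h_t)\cdot L^*h_t|\le C'|||h_t|||^3$. Therefore
\[
\frac{\id}{\id t}\,\caI(\mu_t)=\frac{\id}{\id t}Q(h_t)+\frac{\id}{\id t}R(h_t)\le-c'\,|||h_t|||^2+C'\,|||h_t|||^3 ,
\]
which is $\le 0$ once $|||h_t|||\le c'/C'$; by the uniform bound $|||h_t|||\le C\,e^{-\la t}$ this holds for every $t\ge t_o$ with $t_o$ depending only on the rates, which proves the theorem. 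The routine ingredients are the analyticity of $V_\mu$, the exponential relaxation, and the remainder estimates; the substance lies in the identity $\frac{\id}{\id t}Q(h_t)=-\tfrac12(L_sw_t,Lw_t)$---the only place where the ``normal linear response'' assumption enters---together with the nondegeneracy $|||w_t|||\ge\ka|||h_t|||$, which is what makes the quadratic decay of $Q(h_t)$ win over the cubic remainder; without a genuine sector constant $c>0$ (as opposed to mere nonnegativity of $(L_sf,Lf)$) this comparison would break down, which is why the theorem must postulate more than positivity.
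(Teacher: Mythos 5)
Your proof is correct, and it isolates exactly the same mechanism as the paper --- the sector form $(L_sf,Lf)$ evaluated at the (linearized) blowtorch potential, beating a cubic remainder thanks to exponential relaxation --- but the bookkeeping is genuinely different. The paper differentiates the exact representation \eqref{das} along the flow: the $\dot V_t$-terms cancel because $\mu_t$ is stationary for the rates $k_{V_t}$, then $\dot\mu_t$ is rewritten through \eqref{we} as $\rho\,L_sV_t+\rho\,w(\cdot,\mu_t)$, which yields $\frac{\id}{\id t}\caI(\mu_t)=-\frac12\,(LV_t,L_sV_t)+O(|||V_t|||^3)$ with the \emph{exact} potential $V_t=V_{\mu_t}$. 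You instead Taylor-split the static functional, $\caI(\rho(1+h))=Q(h)+R(h)$ with the explicit Hessian $Q(h)=-\frac14\,(L^*h,L_s^{-1}L^*h)$, exploit the exact linearity $\dot h_t=L^*h_t$ to get the clean identity $\frac{\id}{\id t}Q(h_t)=-\frac12\,(L_sw_t,Lw_t)$ for $w_t=L_s^{-1}L^*h_t$ (the solution of \eqref{sta}), and dump all nonlinearity into $R$. What your route buys: a closed formula for the Hessian of $\caI$ at $\rho$, an exact (not merely leading-order) computation of the dangerous term, and no need to differentiate $\mu\mapsto V_\mu$ in time; what it costs is having to justify the smoothness of $\mu\mapsto V_\mu$ (hence of $\caI$) near $\rho$ and the identification of $Q$ with the optimized second-order truncation of \eqref{das1} --- the latter is fine because the cross term with the second-order correction $V^{(2)}$ is $-\tfrac12(1,LV^{(2)})=0$ by stationarity of $\rho$, and the smoothness input is the same one the paper invokes (its diffeomorphism claim and the bound $|w(x,\mu)|\le C_1|||V_\mu|||^2$). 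Two small points you should make explicit: uniformity in the initial $\mu$ of $|||h_t|||\le C e^{-\la t}$ holds because $|||h_0|||\le 1+1/\min_x\rho(x)$ for every probability distribution, and the comparison $|||w_t|||\ge\ka\,|||h_t|||$ is legitimate since $L^*$ and $L_s$ are bijections of the zero-mean hyperplane on which $|||\cdot|||$ is a norm --- this step plays the role of the paper's statement that $|||V_\mu|||$ is comparable to the distance $\id(\mu,\rho)$.
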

\smallskip
In section \ref{timeses} it will be shown that the time $t_o$ after which monotonicity sets in is of the order of the relaxation time (inverse of the exponential rate of convergence).

Since $L$ and $L_s$ have a bounded inverse on the functions $f$ that have zero mean $\sum_x \rho(x) f(x) =0$, the condition of Theorem~\ref{the} in essence means to require that $(L_s f, Lf) > 0$ for all non-constant $f$.
In fact, it is sufficient and more convenient to verify the inequality
\begin{equation}\label{sec}
(f,L^2f) > 0
\end{equation}
which is usually called (a specific instance of) a sector condition. In particular, for $L$ a normal operator this is equivalent to the assumption that all its non-zero eigenvalues
$\la = -a + i b \neq 0$ obey the inequality
$\re(\la^2) > 0$, i.e., $|b| < a$.
Obviously, the condition~\eqref{sec} is fulfilled whenever the rates $k(x,y)$ satisfy detailed balance~\eqref{db}, $L_e = L^*_e$, see under the section \ref{adb} for further discussion. By a continuity argument, this also extends to dynamics where the detailed balance is only weakly violated.\\

A physical interpretation of the hypothesis for Theorem \ref{the} is in terms of the generalized susceptibility for the linear response around the stationary probability $\rho$. For a function $B$ on $K$ consider perturbed transition rates
\begin{equation}\label{pert}
k(\tau;x,y) := k(x,y)\,e^{\frac{h_\tau}{2}[B(y) - B(x)]},\qquad
\tau\geq 0
\end{equation}
with small time-dependent amplitude $h_\tau$, $|h_\tau| \leq \varepsilon$.  It resembles \eqref{av} but now the perturbation is time-dependent.  That new time-dependent process is started from time zero in the distribution $\rho$ which is stationary for $h\equiv 0$.  At a later time $t>0$, in the process with rates \eqref{pert}, when taking the expectation of a function $G$, we see the difference
\[
\langle G(x_t) \rangle^h -  \sum_x \rho(x) G(x)  =
\int_0^t\id \tau\,h_{t-\tau} \,\chi_{GB}(\tau) + O(\varepsilon^2)
\]
which defines the generalized susceptibility $\chi_{GB}$.  There is an explicit formula extending the standard fluctuation--dissipation theorem, see e.g.~\cite{fdr}:
\begin{equation}
\chi_{GB}(t) := \frac{\delta}{\delta h_0}\langle G(x_t) \rangle^h\Bigr|_{h=0} = -\frac{1}{2}\Big[\frac{d}{dt}\Big<B(x_0)\,G(x_t)\Big>_{\rho} + \Big<LB(x_0)\,G(x_t)\Big>_{\rho} \Big]\label{linr}
\end{equation}
with right-hand expectations in the original stationary process $P_\rho$.
Our next result gives an explicit expression for the zero-time susceptibility in the case $G = B$; see Section~\ref{linresp-proof} for a proof.
\begin{proposition}\label{normal}
We have the identities
\begin{equation}\label{ff}
\frac{\id}{\id t}\,\chi_{ff}(t) \Bigr|_{t=0} = \chi_{Lf,f}(0) = - (L_sf,Lf)
\end{equation}
\end{proposition}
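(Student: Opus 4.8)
The plan is to collapse everything to a single clean identity expressing the generalized susceptibility through the Markov semigroup $e^{tL}$ and the symmetric generator $L_s$, after which both claimed equalities become one-line computations.

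First I would record the semigroup representation of the two-time stationary correlations appearing in \eqref{linr}. Writing $e^{tL}$ for the Markov semigroup, so that $(e^{tL}g)(x)=\langle g(x_t)\rangle$ when started from $x$, stationarity of $\rho$ gives, for any functions $A,B$ on $K$,
\[
\langle A(x_0)\,B(x_t)\rangle_\rho = \sum_{x}\rho(x)\,A(x)\,(e^{tL}B)(x) = (A,\, e^{tL}B).
\]
On the finite state space $K$ the semigroup is real-analytic in $t$ and commutes with $L$, so $\frac{\id}{\id t}(A,e^{tL}B) = (A, L e^{tL}B) = (A, e^{tL}LB)$, which makes all the $t$-derivatives below legitimate.

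Next I would substitute this into the defining formula \eqref{linr}. Using $\frac{\id}{\id t}\langle B(x_0)G(x_t)\rangle_\rho = (B, L e^{tL}G)$ and $\langle LB(x_0)\,G(x_t)\rangle_\rho = (LB, e^{tL}G) = (B, L^* e^{tL}G)$ — the last step by the adjoint relation $(f,Lg)=(L^*f,g)$ — and recalling $L_s=\frac12(L+L^*)$, I obtain the compact formula
\[
\chi_{GB}(t) = -\tfrac12\big[(B, L e^{tL}G) + (B, L^* e^{tL}G)\big] = -\,(B,\, L_s\, e^{tL} G),
\]
valid for all $G,B$. Then I would simply read off the Proposition. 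Taking $G=B=f$ gives $\chi_{ff}(t) = -(f, L_s e^{tL}f)$, hence, differentiating at $t=0$ and using $L e^{tL}=e^{tL}L$,
\[
\frac{\id}{\id t}\chi_{ff}(t)\Big|_{t=0} = -(f, L_s L f).
\]
Taking instead $G=Lf$, $B=f$ gives $\chi_{Lf,f}(t) = -(f, L_s e^{tL} L f)$, so $\chi_{Lf,f}(0) = -(f, L_s L f)$, which establishes the first equality. For the last one I use that $L_s$ is self-adjoint for $(\cdot,\cdot)$ — immediate from $(f,Lg)=(L^*f,g)$ together with $(L^*)^*=L$, so that $(f,L_s g)=(L_s f,g)$ — whence $(f, L_s L f) = (L_s f, L f)$, and all three quantities equal $-(L_s f, L f)$.

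I do not expect a genuine obstacle here; the only points needing a little care are the justification of the semigroup representation of stationary two-time correlations and the bookkeeping of the adjoint $L^*$ (one must keep in mind that the real-space adjoint used in \eqref{linr} is the $(\cdot,\cdot)$-adjoint, which is exactly how $L^*$ was defined). The substantive content is the general identity $\chi_{GB}(t) = -(B, L_s e^{tL}G)$, which in passing makes transparent why the sector-type positivity $(L_s f, Lf)>0$ in Theorem \ref{the} is precisely the statement that the auto-response $\chi_{ff}$ starts to decay at $t=0$, i.e.\ the ``normal linear response'' condition.
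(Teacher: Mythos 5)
Your proposal is correct and follows essentially the same route as the paper: starting from \eqref{linr}, using stationarity and the adjoint relation $(f,Lg)=(L^*f,g)$ to collect $L+L^*=2L_s$ (your identity $\chi_{GB}(t)=-(B,L_s e^{tL}G)$ is exactly the paper's intermediate step $\chi_{ff}(t)=-\tfrac12\langle(L+L^*)f(x_0)f(x_t)\rangle_\rho$ in semigroup notation), and then evaluating or differentiating at $t=0$. The only cosmetic difference is that you place $L_s$ on the propagated observable and invoke its self-adjointness at the end, whereas the paper moves $L+L^*$ onto the time-zero observable directly.
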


As a consequence, the hypothesis of Theorem~\ref{the} can be rephrased as the condition
$\chi_{ff}(t) \leq \chi_{ff}(0)$ for small enough $t > 0$. Such an equilibrium-like response behavior at initial times is called ``normal linear response'' throughout the paper.

\section{Examples}\label{exs}

We illustrate the statements of the previous section by providing some examples.

\subsection{Asymmetric diffusion on the ring}\label{asy}

Consider a ring consisting of $N > 2$ sites,
labeled $x = 1,2,\ldots , N +1\equiv 1 $. For a totally asymmetric random walker the only non-zero transition rates are of the form $k(x,x+1) > 0$ with
Master equation \eqref{jmu} simplified to
\[ \frac{\id \mu_t}{\id t}(x) = \mu_t(x-1)k(x-1,x) - \mu_t(x)k(x,x+1) \]
The stationary distribution is
\[
\rho(x) = \frac{C}{k(x,x+1)},\ \ \ \ \ C^{-1} = \sum_x\frac{1}{k(x,x+1)}\]
The corresponding generators of Theorem \ref{the} are
\begin{align*}
  Lf(x) &= k(x,x+1)[f(x+1) - f(x)]
\\
  L_sf(x) &= \frac 1{2}k(x,x+1)[f(x+1) + f(x-1) - 2f(x)]
\end{align*}
so that the hypothesis of Theorem \ref{the} concerns
\begin{equation}\label{condi}
(L_sf,Lf) = \frac{C^2}{2}\,\sum_x \frac 1{\rho(x)}\,[f(x+1) - f(x)]\,[f(x+1) + f(x-1) - 2f(x)]
\end{equation}

\emph{Homogeneous case.}
Let $k(x,x+1) \equiv p$ for some given $p>0$. In that case, \eqref{condi} is bounded from below by the variation of $f$ because, with $g(x)=f(x+1)-f(x)$, $\sum_x g(x) [g(x)- g(x-1)] = \frac 1{2}\sum_x [g(x)- g(x-1)]^2\geq 0$ on the ring.\\
But we can also explicitly show that ${\cal I}(\mu_t)$ is monotone for all times $t \geq 0$ and starting from all possible $\mu>0$.  For this we find the potential $V_\mu$ solving
\[ 0 = \mu(x)\,p\, e^{[V(x+1)-V(x)]/2} - \mu(x-1)\,p\,e^{[V(x)-V(x-1)]/2} \]
or
\[ \frac{V(x+1)-V(x)}{2} = -\log \mu(x) +  \frac 1{N}\sum_y\log\mu(y)
 \]
Therefore the Donsker-Varadhan functional \eqref{das} equals
\begin{eqnarray*}
{\cal I}(\mu) &=& \sum_x p \mu(x)\left[1 - e^{\frac{V(x+1) - V(x)}{2}}\right]\\
&=& p - pN\Bigl[\prod_{y=1}^N\mu(y)\Bigr]^{\frac{1}{N}}
\end{eqnarray*}
in terms of the geometric mean of the $\mu(1), \mu(2),\ldots, \mu(N)$.
The time derivative at $\mu_t=\mu$ is computed to be
\begin{eqnarray*}
\frac{\id}{\id t}{\cal I}\,(\mu_t)
&=&  - p^2 \Bigl[ \prod_{y=1}^N\mu(y) \Bigr]^{\frac{1}{N}}\sum_{x=1}^N
\Bigl( \frac{\mu(x-1)}{\mu(x)} - 1 \Bigr)
\end{eqnarray*}
which is non-positive by applying Jensen's inequality as
\[ \log\left(\frac{1}{N}\sum_{x=1}^N\frac{\mu(x-1)}{\mu(x)}\right) \geq \frac{1}{N}\sum_{x=1}^N\log \frac{\mu(x-1)}{\mu(x)} = 0 \]
In fact, by the same argument, the time-derivative is strictly negative whenever $\mu\neq \rho$.
Therefore, for homogeneous totally asymmetric walkers on a ring, we always have monotonicity of the geometric mean of the occupations $\mu_t(x)$, as a consequence of the monotonicity of the Donsker-Varadhan functional.\\

\emph{Inhomogeneous case.}
We look back at \eqref{condi} which is now of the form $\sum_x \frac 1{\rho(x)} g(x)[g(x)-g(x-1)]$,
always with $g(x) = f(x) - f(x-1)$.
An explicit computation of the time-derivative of $\caI(\mu_t)$ gives, with
$\mu_t = \mu$,
\begin{equation}\label{td}
\frac{\id}{\id t}\,{\cal I}(\mu_t)
= 2\epsilon^2C^2\sum_x\frac{V(x)}{\rho(x)}\,[V(x-1)-V(x)] + o(\epsilon^2)
\end{equation}
with $V(x) = -h(x) + \frac{1}{N}\sum_x h(x)$ for $\mu(x) =\rho(x)[1+ \epsilon\, h(x)]$.   We thus see how the condition in Theorem~\ref{the} appears.  Without further condition and depending on the shape of the stationary distribution $\rho$, this time-derivative \eqref{td} can be either positive or negative. For an example making ${\cal I}(\mu_t)$ non-montone at initial times, we take $N=4$ with rates $k(1,2)=30,  k(2,3)= k(3,4)= k(4,1)= 1$, and $\mu$ determined from $\epsilon = 0.02$ and
\begin{equation}\label{initial}
V(1) = 1, V(2) = -3, V(3) = 0, V(4) = 2
\end{equation}
Then the time-derivative \eqref{td} is positive whenever
\[ \frac{1}{\rho(1)} > \frac{12}{\rho(2)} + \frac{4}{\rho(4)} \]
To visualize this example we show in Fig.~\ref{fig:ce} the result of a numerical computation of ${\cal I}(\mu_t)$ for this initial condition (\ref{initial}).
\begin{figure}[ht]
\includegraphics[width=0.7\linewidth]{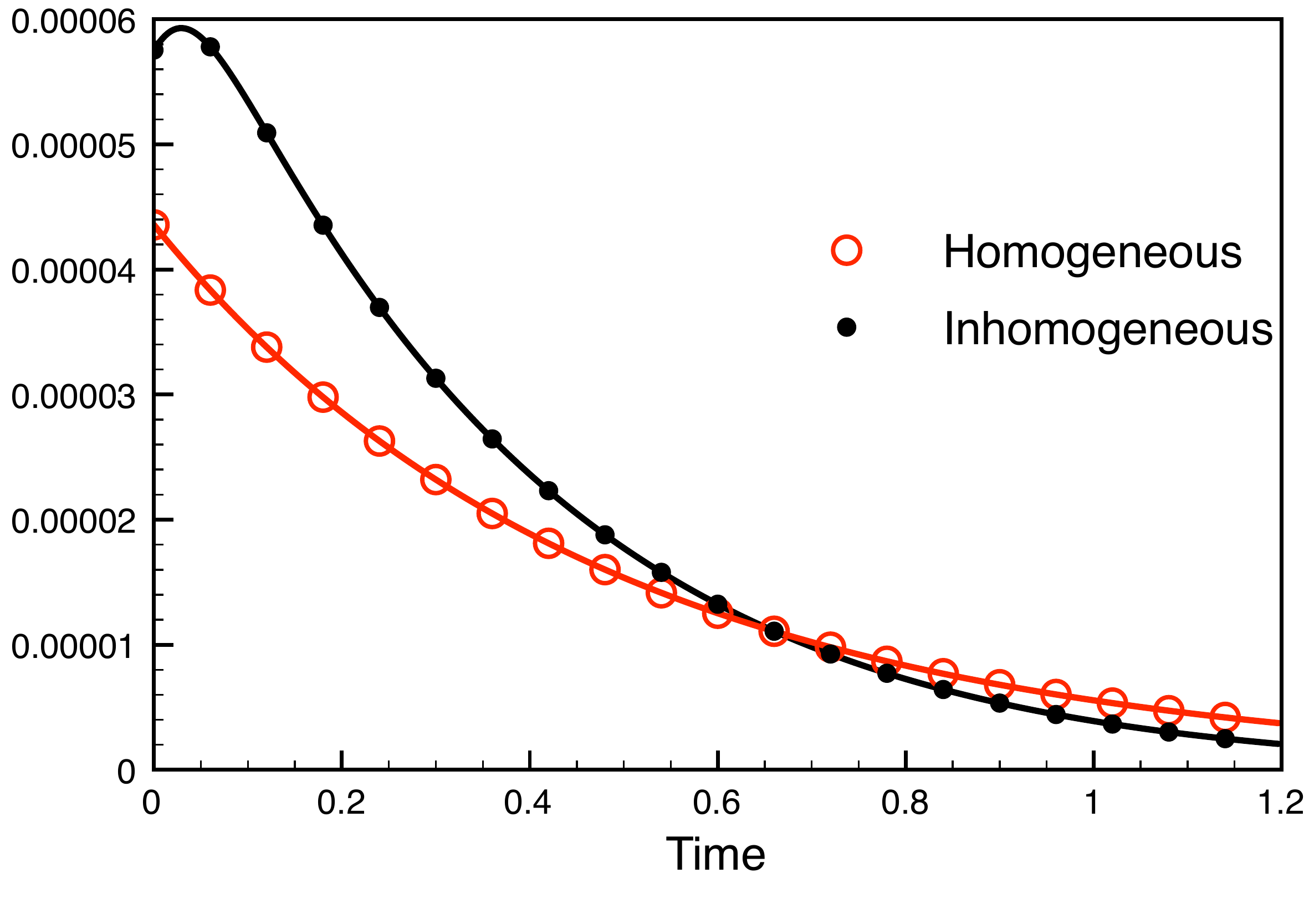}
\caption{The functional ${\cal I}(\mu_t)$ as a function of time. The curve with open circles shows the evolution of ${\cal I}(\mu_t)$ for the case of a homogeneous stationary distribution. The black curve (closed smaller circles) represent the case where the stationary distribution is inhomogeneous: $\rho(1) = 1/91, \rho(2) = \rho(3)=\rho(4) = 30/91$.}
\label{fig:ce}
\end{figure}
Observe however, even in this example, that after a short initial slip
${\cal I}(\mu_t)$ starts decreasing monotonically.  In other words, while for all $\epsilon>0$ there is a probability $\mu$ in the  neighborhood of the stationary law $\rho$ with variational distance $d(\mu,\rho) \leq \epsilon$, so that ${\cal I}(\mu_t)$ is not monotone at $\mu_t=\mu$, still ${\cal I}(\mu_t)$ decays monotonically to zero \emph{eventually} (after a long enough time $t$).  This also indicates that the hypothesis in Theorem \ref{the} is not at all necessary.\\

An example of a driven diffusion process on the ring was treated in \cite{newp}.  There,  in Fig.~1 of \cite{newp}, we have seen that ${\cal I}(\mu_t)$ keeps decaying monotonically while the entropy production ${\cal E}(\mu_t)$, the continuum version of \eqref{ep}, oscillates in time.

\subsection{Detailed balanced dynamics}\label{adb}

Under detailed balance \eqref{db}, we can take $g=\sqrt{\rho_e/\mu}$ in \eqref{dv} to find equality with the Dirichlet form
\[
{\cal I}_e(\mu) = -\sum_x\rho_e(x) \, \sqrt{\frac{\mu(x)}{\rho_e(x)}}\Bigl(L_e\sqrt{\frac{\mu}{\rho_e}}\Bigr)(x) = -\Bigl(\sqrt{\frac{\mu}{\rho_e}},L_e\sqrt{\frac{\mu}{\rho_e}}\Bigr)
\]
Furthermore, for $\mu_t=\mu$ at time $t$
\[ \dot{\mu}_t=   \rho_e\, L_e \bigl( \frac{\mu}{\rho_e} \bigr) \]
so that we get the time derivative
\begin{equation}\label{timede}
\frac{\id}{\id t}\,{\cal I}_e(\mu_t) =
- \sum_x \frac {\rho_e(x)}{f(x)}\,(L_e f^2)(x)\,(L_e f)(x)
\end{equation}
where we have abbreviated $f = \sqrt{\mu/\rho_e}$.   Obviously, that time-derivative is negative whenever $f-1$ is sufficiently small (close-to-stationarity): with $f(x)  = 1 + \epsilon\,h(x)$,
\[ \frac{\id}{\id t}\,{\cal I}_e(\mu_t) =
- 2\epsilon^2\sum_x \rho_e(x) \,[(L_e h)(x)]^2 + o(\epsilon^2)
\]
That is equivalent to what was mentioned before under Theorem \ref{the}: the hypothesis there is always satisfied under detailed balance, and we see that the time $t_0$ will be of the order of the relaxation time, characterizing the uniform exponentially fast convergence to equilibrium.  More explicit calculations reveal also that the derivative \eqref{timede} is strictly negative for all $f>0$ when $|K| \leq 3$.

On the other hand, we can check that the time-derivative~\eqref{timede} may obtain positive values too. For convenience, we briefly consider a one-dimensional diffusion variant of the same problem, with the forward generator
\[
L_e g = \frac 1{\rho_e}\frac{\id}{\id x}\big(\rho_e\, \frac{\id g}{\id x} \big)
\]
for smooth functions $g$.
Let us take $\mu(x) = a/\rho_e(x)$ inside the interval $[1,\ell]$, where $\ell > 1$ and $a$ is a normalization, and $\mu(x)\simeq \rho_e(x)$ very rapidly decaying to zero outside that same interval.  Then, at that $\mu$, \eqref{timede} becomes
\[ \frac{\id}{\id t}\,{\cal I}_e(\mu_t) \simeq - 2\int_1^\ell\id x \,f''(x)\, (\log f)''(x) \]
which will be positive e.g.\ when $f$ is convex while $\log f$ is concave --- for example, with $\rho_e(x) = c_\ell/x^2$ on $[1,\ell]$, we have $f(x) \sim x^2$, $\log f(x) \sim \log x$ on that same interval.

\section{Proofs}\label{pro}

\subsection{Proof of Proposition~\ref{bt}}\label{sub:blowtorch}

For arbitrary $\mu > 0$ we consider the auxiliary functional
\begin{equation}
  Y_\mu(W) := \sum_{x,y \in K} \mu(x)\,k_W(x,y)
  = \sum_{x,y \in K} \mu(x)\,k(x,y)\, \exp \Bigl[
  \frac{W(y) - W(x)}{2} \Bigr]
\end{equation}
defined on all functions on $K$.  Of course, since the value only depends on the differences $W(y)-W(x)$ we can as well take $W \in C_0(K)$, the collection of all functions that are equal to zero on a fixed ``root'' $x_0 \in K$. This functional $Y_\mu$ is nonnegative and convex,
\begin{equation}\label{convex}
  Y_\mu(\la W_1 + (1-\la) W_2) \leq \la Y_\mu(W_1) + (1-\la) Y_\mu(W_2)
\end{equation}
by convexity of each contribution $\mu(x)\,k_W(x,y)$.
Below in Lemma~\ref{lem:technical} we prove that under the irreducibility assumption, $Y_\mu$ is actually \emph{strictly} convex and that it attains inside $C_0(K)$ a unique minimum at some $W = W_\mu$. Hence, $W_\mu$ is also a  minimizer (unique up to an additive constant) on the unconstrained space of all functions on $K$, implying that for all $x \in K$,
\begin{equation}
\begin{split}
  0 &= \frac{\de Y_\mu}{\de W(x)} \Bigr|_{W = W_\mu} =
  \frac{1}{2} \sum_{y \in K} [\mu(y)\,k_W(y,x) -
  \mu(x)\,k_W(x,y)]
\end{split}
\end{equation}
which is just the stationarity of $\mu$ for the dynamics with rates $k_W(x,y)$,
i.e., $V = V_\mu$ of Proposition~\ref{bt} does exist and equals $W_\mu$. This also proves formula~\eqref{das1} as
\begin{equation}
  {\cal I}(\mu) = \sum_{x,y \in K} \mu(x)\,k(x,y) - Y_\mu(W_\mu)
  = \sup_V \sum_{x,y \in K} \mu(x)\,[k(x,y) - k_V(x,y)]
\end{equation}

A general reducible dynamics can be decomposed into irreducible components (including isolated sites) and for each of them the above argument holds true, i.e.,
the supremum on the right-hand side of~\eqref{das1} is attained on
a function $V_\mu$, which is also a solution of the inverse stationarity problem and which is unique up to a constant within each component.\\

Again turning to irreducible dynamics, we  have
\begin{lemma}\label{lem:technical}
For any $\mu > 0$, $Y_\mu|_{C_0(K)}$ is strictly convex and has a unique minimum.
\end{lemma}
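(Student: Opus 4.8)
The plan is to prove the two assertions in turn: strict convexity of $Y_\mu$ on $C_0(K)$, and then coercivity, after which existence of a minimum follows from continuity on the finite-dimensional space $C_0(K)$ and uniqueness from strict convexity.

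For strict convexity, I would note that each summand $\mu(x)\,k(x,y)\,\exp[(W(y)-W(x))/2]$ is a nonnegative constant times the exponential of the linear functional $W\mapsto W(y)-W(x)$, hence convex; moreover, along any segment $\la\mapsto \la W_1+(1-\la)W_2$ it is \emph{strictly} convex unless $W_1(y)-W_1(x)=W_2(y)-W_2(x)$, because $t\mapsto e^{t/2}$ is strictly convex. Consequently, if $Y_\mu$ were affine on such a segment, equality would have to hold in every summand with $k(x,y)>0$, i.e.\ $\phi(y)=\phi(x)$ for $\phi:=W_1-W_2$ along every edge of the transition graph. By irreducibility the transition graph is strongly connected, so $\phi$ is constant on $K$; being zero at the root $x_0$, it vanishes identically, giving $W_1=W_2$. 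This is strict convexity on $C_0(K)$.

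For coercivity I would argue as follows. Set $N:=|K|$ and $\kappa:=\min\{\mu(x)\,k(x,y):k(x,y)>0\}>0$ (a finite minimum of positive numbers since $\mu>0$). Given $W\in C_0(K)$, write $M=\max_x W(x)\ge 0$ and $m=\min_x W(x)\le 0$, so that $|||W|||=M-m$ and either $M\ge |||W|||/2$ or $-m\ge |||W|||/2$. In the first case pick $z$ with $W(z)=M$ and, by irreducibility, a directed path $x_0=v_0\to v_1\to\cdots\to v_\ell=z$ with $k(v_{i-1},v_i)>0$ and $\ell\le N$; telescoping $\sum_{i}(W(v_i)-W(v_{i-1}))=M$ shows some edge $(v_{i-1},v_i)$ on it has $W(v_i)-W(v_{i-1})\ge M/N$. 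In the second case one uses instead a directed path from $z$ (with $W(z)=m$) to $x_0$, obtaining an edge with difference $\ge -m/N$. Either way $Y_\mu(W)\ge\kappa\,e^{|||W|||/(4N)}$, so $Y_\mu(W)\to\infty$ as $|||W|||\to\infty$ on $C_0(K)$. Since $Y_\mu$ is manifestly continuous, it attains its infimum on $C_0(K)$, and strict convexity makes the minimiser unique.

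The only step requiring genuine care is the coercivity bound: one must invoke irreducibility in the right direction to connect the root $x_0$ to the extremal state by a short directed path and then extract a single edge carrying a proportional share of the total variation. The strict convexity argument and the concluding existence/uniqueness step are immediate once this is in place.
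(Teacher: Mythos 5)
Your proof is correct and follows essentially the same route as the paper: strict convexity of the exponential forces equal increments of $W_1-W_2$ along positive-rate edges, which irreducibility (connectivity) upgrades to $W_1=W_2$ given the normalization at the root, and existence follows from an exponential lower bound $Y_\mu(W)\gtrsim e^{c\,|||W|||}$ obtained by extracting one edge on a directed path that carries a proportional share of the variation. The only cosmetic difference is that the paper works with a single covering cycle through all of $K$ for both steps, whereas you use strong connectivity for convexity and root-to-extremum paths plus coercivity/continuity for existence; the mechanism is identical.
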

\begin{proof}
By irreducibility, there exists a cyclic sequence of states $(x_0,x_1,...,x_n = x_0)$ that covers the whole space $K$ and such that for all consecutive pairs of states,
$\mu(x_{i-1})\,k(x_{i-1},x_i) \geq \de$ with some $\de > 0$. If $W_1$ and $W_2$ are such that the relation~\eqref{convex} becomes an equality, then, for all $i = 1,\ldots,n$,
$W_1(x_i) - W_1(x_{i-1}) = W_2(x_i) - W_2(x_{i-1})$, by using that the exponential is strictly convex. From $W_1(x_0) = W_2(x_0) = 0$ then follows $W_1 = W_2$, identically. This proves the strict convexity and hence
the uniqueness of the minimum for $Y_\mu|_{C_0(K)}$.

To prove that the minimum exists, we consider the compact sets
\[
  C_0^a(K) := \{W \in C_0(K);\,|W(x)| \leq a\,\,\text{for all } x \},\quad a > 0
\]
and define $M_\mu := Y_\mu(0) = \sum_{x,y \in K} \mu(x)\,k(x,y)$. By construction, for any
$W \in C_0(K) \setminus C_0^a(K)$ there exists $i$ such that
$W(x_i) - W(x_{i-1}) > a/n$ and hence
$Y_\mu(W) > \de\,e^{a/(2n)}$. Fix now some $a$ so that $\de\,e^{a/(2n)} > M_\mu$. By compactness,
$Y_\mu$ on the set $C_0^a(K)$ attains the minimum, which then coincides with the minimum of
$Y_\mu|_{C_0(K)}$.
\end{proof}

\subsection{The map $\mu \mapsto V_\mu$}
Most importantly, from the previous section, the map $\mu \mapsto V_\mu$ is a bijection when we think of the potential modulo a constant.  Moreover $V_\mu$ depends smoothly on $\mu$, and {\it vice versa}. In other words, the map $\mu \mapsto V_\mu$ is a diffeomorphism with variational distance $\id(\mu,\rho)$ of the same order as $V$:
\[
c_0\,\id(\mu,\rho) \leq |||V_\mu||| \leq c_1\, \id(\mu,\rho)
\]
for constants $c_0,c_1 >0$. That is really a consequence of the irreducibility of the finite Markov process, or see chapter two in \cite{kato}.

Heuristically it suffices to understand the linearized map around $\mu=\rho, V_\rho=0$ since the modified rates $k_{V_1+V_2}(x,y)= k_{V_1}(x,y)\exp\{V_2(y)-V_2(x)\}/2$ each time define an irreducible Markov process for each $V_1$.  Writing $\mu =\rho(1+ \varepsilon h)$ for some function $h$ with mean $\sum_x \rho(x) h(x)=0$ and for small $\varepsilon$, we easily find $V_\mu = \varepsilon v + O(\varepsilon^2)$ with
\begin{equation}\label{sta}
L_s v = L^* h
\end{equation}
Note here that $\rho$ is also invariant under the time-reversed process and under the (detailed balanced) process generated by $L_s$.  Hence, $\sum_x \rho(x) L^*h(x) =0$ and $L^*h$ is in the domain of the (Drazin) pseudo-inverse $(L_s)^{-1}$, so that \eqref{sta} has a unique solution (again up to a constant); in fact $h=0$ if and only if $v=0$.\\

The computation leading to \eqref{sta} goes as follows. For all $x\in K$,
\[
0 = \sum_y \bigl( \mu(y)k(y,x)e^{\frac{V(x)-V(y)}{2}} - \mu(x)k(x,y)e^{\frac{V(y)-V(x)}{2}}\bigr)
\]
which by expanding the exponential directly yields the identity
\begin{equation}\label{we}
L^* \bigl(\frac{\mu}{\rho}-1 \bigr)(x) - L_sV_\mu(x)  = w(x,\mu)
\end{equation}
where (with $V=V_\mu$)
\begin{multline}\label{ide}
w(x,\mu) :=
\sum_y\bigl\{\bigl(\frac{\mu(y)}{\rho(y)}-1\bigr)\frac{\rho(y)k(y,x)}{\rho(x)}\frac{V(x)-V(y)}{2} + \bigl( \frac{\mu(x)}{\rho(x)}-1)k(x,y \bigr) \frac{V(x)-V(y)}{2} \bigr\}
\\
+\sum_y \big\{\frac{\mu(y)}{\rho(x)}k(y,x)-\frac{\mu(x)}{\rho(x)}k(x,y)\bigr\} \delta_V(x,y)
\end{multline}
for $\delta_V(x,y):= \sum_{n=2}[\frac{V(x)-V(y)}{2}]^n\frac 1{n!}$.  Each difference
\[
  \Bigl|\frac{\mu(x)}{\rho(x)}-1 \Bigr| \leq C_0\,|||V_\mu|||, \quad x\in K
\]
so that $|w(x,\mu)|\leq C_1\,|||V_\mu|||^2$ for some constant $C_1$ when $|||V_\mu|||$ is sufficiently small.

\subsection{Proof of Theorem \ref{the}}\label{timeses}
We must take the time-derivative of ${\cal I}(\mu_t)$,
\begin{equation}\label{timed}
\begin{split}
\frac{\id}{\id t}\,{\cal I}(\mu_t) &= -\frac 1{2}\sum_{x,y}\dot{\mu_t}(x) k(x,y)\big[V_t(y) -V_t(x) +2 \delta_V(y,x)\big]
\\
&\phantom{**}+ \frac 1{2}\sum_{x,y}\mu_t(x)k_{V_t}(x,y)\big[\dot{V_t}(x) - \dot{V_t}(y)]
\end{split}
\end{equation} where $\dot{V}_t(x) := \frac{\id}{\id t}V_{\mu_t}(x)$.  The second line in \eqref{timed} equals zero because per fixed time $t$, $\mu_t$ is stationary for the dynamics with rates $k_{V_t}(x,y)$.
We thus have
\[
\frac{\id}{\id t}\,{\cal I}(\mu_t) = -\frac 1{2}\sum_{x}\dot{\mu_t}(x)\big[LV_t (x) + 2\sum_y k(x,y) \delta_V(y,x)\big]
 \]
Looking at the first term, we use that
\[
\dot{\mu_t}(x) = \rho(x) \,L^*(\frac{\mu_t}{\rho}-1)(x) = \rho(x) L_sV_t(x) + \rho(x)w(x,\mu_t)
\]
as introduced in \eqref{we}.  In other words, we have obtained
\[
\frac{\id}{\id t}{\cal I}(\mu_t) = -\frac 1{2} Q(V_t) - \frac 1{2}\sum_x\rho(x)w(x,\mu_t)\,LV_t(x) - \sum_x\dot{\mu}_t(x)k(x,y)\delta_{V_t}(y,x)
\]
for the quadratic form $Q(f) := (Lf,L_sf)= (f,L^*L_s f) = (L_s Lf,f)$ which, from the hypothesis of Theorem \ref{the}, is bounded from below by $c\,|||f|||^2$.   Since $|||V_t||| \leq K\exp[-\gamma t]$ for some $K<\infty, \gamma >0$, it suffices finally to realize that, at least for large enough times $t$,
\[
\Bigl|\frac 1{2}\sum_x\rho(x)w(x,\mu_t)LV_t(x) + \sum_x\dot{\mu}_t(x)k(x,y)\delta_{V_t}(y,x) \Bigr| \leq
C \, |||V_t|||^{3}
 \]
for some $C<\infty$.  That easily follows by applying uniform bounds such as
$|LV(x)|\leq C_2\,|||V|||$ and $|\delta_V(x,y)| \leq C_3\,|||V|||^2$ for small enough $V$, combined with the previous estimate $|w(x,\mu)|\leq C_1\,|||V_\mu|||^2$ making also $|\dot{\mu_t}(x)|
\leq C_4\,|||V_t|||$.  That concludes the proof of Theorem \ref{the}.\\

The proof above obviously gives an estimate of the time $t_0$ after which monotonicity surely sets in.
Since $|||V_t|||$ is of the order $\exp(-\gamma t)$,\, $Q(V_t)>0$ dominates the time-derivative of the Donsker-Varadhan functional when $\gamma t \gg 1$, i.e., for times beyond the relaxation time to the nonequilibrium steady regime.

\subsection{Proof of Proposition~\ref{normal}}\label{linresp-proof}

A straightforward computation gives, for all $t \geq 0$,
\begin{equation}
\begin{split}
\chi_{ff}(t) &= -\frac{1}{2}\Big[\frac{d}{dt}\Big<f(x_0)f(x_t)\Big>_{\rho} + \Big<Lf(x_0)f(x_t)\Big>_{\rho} \Big]
\\
&= -\frac{1}{2}\Big<(L+L^*)f(x_0)f(x_t)\Big>_{\rho}
\end{split}
\end{equation}
and hence
\begin{equation}\label{core}
  \frac{d}{\id t}\,\chi_{ff}(t)\Bigr|_{t=0} =  -(L_sf, Lf) = -Q(f)
\end{equation}
Similarly, we look at the response of the observable $Lf$ to find
\begin{eqnarray}
\chi_{Lf,f}(t) &=& -\frac{1}{2}\Big[\frac{d}{dt}\Big<f(x_0)Lf(x_t)\Big>_{\rho} + \Big<Lf(x_0)Lf(x_t)\Big>_{\rho} \Big]\nonumber\\
          &=& -\frac{1}{2}\Big<(L^*+L)f(x_0)Lf(x_t)\Big>_{\rho}\nonumber\\
\chi_{Lf,f}(0) &=&  - (L_sf, Lf)\label{ellv}
\end{eqnarray}
In particular, the equality~\eqref{core} = \eqref{ellv} shows that at equal times we can commute the time-derivative and the derivative with respect to the perturbation.

\section{Conclusion}
The Donsker-Varadhan functional is related to the dynamical activity as recently studied in constructions of nonequilibrium statistical mechanics. We have given a necessary condition for its monotonicity under the Master equation. The condition was called ``normal linear response,'' as it requires the generalized susceptibility to
initially decay in time.  We have given also examples where that condition fails and where the dynamical activity starts out being non-monotone. It remains open to understand why
in those models where that sufficient condition is violated the large-time behavior of the activity is still monotone as observed numerically.

\begin{acknowledgments}
C.M.\ benefits from the Belgian Interuniversity Attraction Poles Programme P6/02. K.N.\ acknowledges the
support from the Academy of Sciences of the Czech Republic under Project No.~AV0Z10100520.
This work was initiated while B.W. completed a post-doctoral stay at the Institut de Physique Th\'eorique, CEA-Saclay, France.
\end{acknowledgments}

\end{document}